\newtheorem{thm}{Theorem}
\newtheorem{prop}{Proposition}
\newtheorem{defn}{Definition}
\title{An Axiomatic Approach to Detect Information Leaks in Concurrent Programs} 
\author{\IEEEauthorblockN{Sandip Ghosal}
\IEEEauthorblockA{\textit{Department of Computer Science \& Engineering} \\
\textit{Indian Institute of Technology, Bombay}\\
Mumbai, India \\
sandipsmit@gmail.com}

\and

\IEEEauthorblockN{R. K. Shyamasundar}
\IEEEauthorblockA{\textit{Department of Computer Science \& Engineering} \\
\textit{Indian Institute of Technology, Bombay}\\
Mumbai, India \\
shyamasundar@gmail.com}
}
\begin{document}


\maketitle

\begin{abstract}
Realizing flow security in a concurrent environment is extremely challenging, primarily due to non-deterministic nature of execution. The difficulty is further exacerbated from a security angle if sequential threads disclose control locations through publicly observable statements like \textit{print}, \textit{sleep}, \textit{delay}, etc. Such observations lead to \textit{internal} and \textit{external} timing attacks. Inspired by previous works that use classical Hoare style proof systems for establishing correctness of distributed (real-time) programs, in this paper, we describe a method for finding \textit {information leaks} in concurrent programs through the introduction of  \textit{leaky assertions} at observable program points. Specifying leaky assertions akin to classic assertions, we demonstrate how information leaks can be detected in a concurrent context. To our knowledge, this is the first such work that enables integration of different notions of {\it non-interference} used in functional and security context. While the approach is sound and relatively complete in the classic sense, it enables the use of algorithmic techniques that enable programmers to come up with leaky assertions that enable checking for information leaks in sensitive applications.
\end{abstract} 

\section{Introduction}
\label{sec:introduction}
Information flow control (IFC) in sequential and concurrent programs has been an active area of research for over two decades. While researchers have developed specification approaches  \cite{zdancewic2003observational,boudol2002noninterference,volpano_smith,goguen1982security} and enforcement mechanisms for sequential programs \cite{buiras2015hlio,stefan2011flexible,zheng2007dynamic,hammer2009flow,simonet2003flow,myers2001jif,denning1977certification}, flow security in concurrent programs remains a challenge primarily due to covert channels that could leak secret information.  {\it Timing channels} in concurrent programs are the most commonly known covert channels.

\begin{figure}[htb]
\centering
	\begin{minipage}[b]{3.8cm}
		\begin{lstlisting}[frame=single]
await sem>0 then
  sem = sem - 1;
  print(`a');
  v = v+1;
  print(`b');
  sem = sem + 1;
		\end{lstlisting}
	\end{minipage}
	\hspace{1mm}
	\begin{minipage}[b]{4.2cm}
		\begin{lstlisting}[frame=single]
print(`c');
if h then
  await sem>0 then
    sem = sem - 1;
    v = v+2;
    sem = sem + 1;
else
  skip;
print(`d');
		\end{lstlisting}
	\end{minipage}
\caption{(a) Process $T_1$ (left); (b) Process $T_2$ (right)}
\label{fig:observable_leak}
\end{figure}
For instance, often programs encode secret information in such a way that allow an attacker to learn about confidential data by observing time variations between interleaving execution of assignments to public variables or output statements such as \textit{print}, or the availability of shared resources in a particular time interval. 
Recent flow security enforcing mechanisms for sequential programs \cite{buiras2014dynamic,stefan2011flexible}  identify information leaks in individual threads, but the approaches are too restrictive for programs that reveal the control position through the observable output statements. For example, consider the program shown in Figure \ref{fig:observable_leak}, (Cf. \cite{le2007automaton}) where $h$ is a secret variable, and \texttt{await} command is a conditional critical region. The \texttt{await} command essentially represents a number of standard synchronization primitives including semaphores. Let us suppose that process $T_1$ holds the semaphore and executes the critical region as an indivisible operation. Then,  output `acdb' arising due to concurrent executions of $T_1$ and $T_2$ reveal the value of $h$ as 0. There have been a  few efforts aiming to prevent timing leaks in concurrent or multi-threaded programs \cite{vassena2019foundations,vassena2017securing,popescu2013formal,barthe2010security,russo2006closing,smith2001new,agat2000transforming,smith1998secure}  but majorly have sidestepped the issue that arise due to output statements.

In this paper, we aim to preserve the confidentiality of programs in a concurrent context preventing an attacker external to the system from gaining knowledge about secret information through observations of program outputs. We describe an axiomatic approach that establishes functional correctness \cite{owicki1976axiomatic} in a parallel context to realize the insecurity at a program point that might leak information due to concurrent executions. In the envisaged work, we introduce auxiliary time variables that capture execution time for each execution step and enables us to derive a notion of leaky assertion for each public program locations, that when satisfied, one can conclude that the program leaks out information. Enforcing functional non-interference in a concurrent setting that uses leaky assertions leads to the  identification of plausible information leaks at various program locations. Our proposed approach is primarily motivated by approaches used by Lamport and Schneider \cite{schneider1991putting,lamport1982assertional,lamport2005real,abadi1994old} on the correctness of real-time distributed programs. Similar to these approaches using  time as an ordinary variable for specifying and reasoning about real-time systems, we we arrive at leaky assertions using an external time variable. With such a specification leaky and standard assertions for concurrent programs, we check for non-interference using  verification conditions similar to those used in `Gries-Owicki' \cite{owicki1976axiomatic}. By showing the program is non-interference, we infer information leaks through the leaky assertions embedded in the specification.

Main contributions of the paper are:
\begin{enumerate}
\item Introduction of leaky assertions when embedded in concurrent program specifications, enables detection of information leaks once the proof of components of the concurrent programs are shown to be correct when considered in isolation and are further shown to be interference free.
\item Demonstration of the method of embedding leaky assertions  in standard concurrent program specification  \cite{owicki1976axiomatic}, can be used to assert information leaks, from which the information leaks are detected.
\item Checking for information leaks due to public observers like IO and time sensitive operations (those that could disclose program locations through output statements or observable timing channels), in non-interferring  flow secure concurrent programs \cite{andrews1980axiomatic}.
\end{enumerate}

Rest of the paper is organized as follows. Section \ref{sec:language_syntax} provides an abstract syntax of the concurrent language. Section \ref{sec:background} briefly describes  the necessary background on correctness of concurrent programs and  non-interference.  Section \ref{sec:timing_sensitive} illustrates our solution followed by a formal description of our approach and proof of soundness in Section \ref{sec:axiomatic}. Section \ref{sec:related_work} compares our approach with other related work in the literature, followed by conclusions in Section \ref{sec:conclusions}.
\section{Language Syntax}
\label{sec:language_syntax}
Before proceeding with the solution, we shall briefly describe the syntax of the language we use.

Let $s$ ranging over  set $S$ be the set of commands, $x$ be the set of variables ranging over $X$, and $e$ denote an expression. Language primarily comprises basic actions such as \texttt{skip}, assignment, sequence and control statements such as conditional branch, iteration, \texttt{await} as in \cite{owicki1976axiomatic}. In addition, we consider two  statements  \texttt{print(e)} and \texttt{delay($e$)} that serve as observable points from the perspective of security and hence, these statements are often termed {\it public\/} statements. Statement \texttt{print($e$)} writes $\llbracket e\rrbracket$, where $\llbracket . \rrbracket$ denotes the classic evaluation of expressions, into the standard output file that is  visible to public. On the other hand, \texttt{delay($e$)} postpones the current execution for  $\llbracket e\rrbracket$ time units. The abstract syntax of the language is given below:
\begin{tabbing}
$s$ ::=\=\hspace{0.5cm}{\bf \texttt{skip}} $|$ $x=e$ $|$ $s_1;s_2$ $|$ {\bf \texttt{if}} $e$ \textbf{\texttt{then}} $s_1$ {\bf \texttt{else}} $s_2$ \\
\>\hspace{0.5cm}$|$ {\bf \texttt{while}} $e$ \textbf{\texttt{do}} $s$ \textbf{\texttt{done}} $|$ \texttt{print}($e$) $|$ \texttt{delay}($e$) \\
\>\hspace{0.5cm}$|$ $s_1\parallel s_2$ $|$ \texttt{await} $b$ \texttt{then} $s$\\
\end{tabbing}

\noindent Operator $s_1\parallel s_2$ denotes a parallel (or concurrent)  composition of sequential programs $s_1$ and $s_2$. The \textbf{\texttt{await}} command is a \textit{conditional critical region}, that turns a statement $s$ into an indivisible action, is executed in a state that satisfies condition $b$. The semantics of sequential part of our language essentially follows the classical while-language. Whereas, the semantics for concurrent composition and \texttt{await} statements are in line with the transition rules given in \cite{brookes1996full}. 

\section{Correctness of Concurrent Programs}
\label{sec:background}


The seminal work of Hoare \cite{hoare1969axiomatic} provided a basis for establishing axiomatic correctness of imperative programs. Let $P$ and $Q$ be assertions referred to as pre-condition and post-condition for program/statement $S$. Hoare's triple specification for partial correctness of program $S$, is denoted $\{P\}S\{Q\}$ with the following interpretation:  if \textit{pre-condition} $P$ is true before execution of $S$, then \textit{post-condition} $Q$ holds if and when the execution of $S$ terminates.  

Owicki and Gries \cite{owicki1976axiomatic} generalized  Hoare's axiomatic approach  for shared variable concurrency with the introduction of   non-interference (functional) of programs.
Let $S_i$ for i =1,...n, be sequential programs.  $S_1\parallel S_2 \parallel\dots\parallel S_n$ denotes concurrent composition.
The effect of concurrent execution is the same as executing each  $S_i$ as an independent program if the programs are disjoint. If the programs are not disjoint, then, it is necessary that programs be {\it interference free}. Non-interference requires that assertions used in the proof $\{P_i\}S_i\{Q_i\}$ for each process remain validated under concurrent execution of the other processes. Let, $T$ be a statement in program $S_i$. Then, concurrent execution of  $\{P_1\}S_1\{Q_1\},\dots,\{P_n\}S_n\{Q_n\}$ is said to be \textit{non-interfering} if $T$ does not interfere with $\{P_j\}S_j\{Q_j\}$ where $i\neq j$. 
Formal definition is given below.

\begin{defn}[\textbf{Correctness}]
\label{defn:functional_correctness}
Given a proof $\{P\}S\{Q\}$ and a statement $T$ with the pre-condition $pre(T)$, the statement $T$ is said to be \textit{non-interfering} with $\{P\}S\{Q\}$ if the following two conditions hold:
\begin{enumerate}
	\item $\{Q\wedge pre(T)\}T\{Q\}$;
	\item $\{pre(S')\wedge pre(T)\}T\{pre(S')\}$ where, $S'$ is an \texttt{await} or assignment statement within $S$ but not within an \texttt{await} block.
\end{enumerate} 
\end{defn}

\section{Capturing Information Leaks Through Output}
\label{sec:timing_sensitive}

The crux of our approach lies in setting observers from which one can draw inference about information leaks. Observation points allow us to deduce how long a process could wait at a control point relative to another process at some other control point wherein the two processes share information via shared variables (could be sensitive) through synchronization. This is realized through auxiliary variables that are used in the specification of concurrent programs. Lamport \cite{lamport2005real} illustrates an effective use of auxiliary variables for establishing correctness of real-time programs. 



Timing constraints for the waiting time that impose upper or lower bound for an action to occur can be expressed using auxiliary (time) variables. Semantic characterization of time constraints required for mutual exclusion and analysis of bounded waiting time is similar to that shown in \cite{lamport2005real}.



\subsection{ Illustrative Example}
\label{subsec:solution}

Let us illustrate that the labelled program $T_2$ showin in Figure \ref{fig:location-leaks}, where labels denote identifiers for statements as well as locations of program control does indeed leak information via public statements. Labels $l_0$ and $l_8$ are the entry and exit labels respectively. Let us mark the execution time of the program at specific program locations using auxiliary variable $t$.
\begin{figure}[htb]
\centering
        \begin{lstlisting}[numbers=none,framexleftmargin=0em]
$\color{blue}l_0:$  $\mathbf{\langle\langle t=T\rangle\rangle}$ print(`c');
$\color{blue}l_1:$ if h then
  $\color{blue}l_2:$ await sem>0 then
    $\color{blue}l_3:$ sem = sem - 1;
    $\color{blue}l_4:$ v = v+2;
    $\color{blue}l_5:$ sem = sem + 1
   else
    $\color{blue}l_6:$ skip;
$\color{blue}l_7:$ $\mathbf{\langle\langle t=t'\rangle\rangle}$ print(`d');
$\color{blue}l_8:$
        \end{lstlisting}
\caption{Process $T2$ (Cf. Figure \ref{fig:observable_leak}) with auxiliary variable $t$}
\label{fig:location-leaks}
\end{figure}
Execution of statements $l_3$ to $l_5$ is subject to availability of  $sem$, and  hence, involves a delay, say $\Delta t$. Now, an observer can easily measure the intermediate time difference between execution of public statements $l_0$ and $l_7$ and just a comparison with $\Delta t$ would enable him to deduce that information in $h$ more deterministic (informally, cardinality of the set of possible values gets reduced from the original set of possibilities)  than he could infer otherwise.

\begin{align*}
\footnotesize
    \begin{array}{c}
\dfrac{[at\_l_0]:t=T\quad[at\_l_7]:t=t'\quad t'-T\ll\Delta t}{h=0}\\
\\
\dfrac{[at\_l_0]:t=T\quad[at\_l_7]:t=t'\quad t'-T\geqslant\Delta t}{h=1} \\
	\end{array}
\end{align*}

In short, our approach is based on specifying postulates of leaky assertions at expected control points in the program and establish  correctness of the program in a standard manner. Once the proof is established, the leaky assertions locations indicate the points of leakage of information; the actual path can be derived using the semantic rules of the language for the program. 

Thus, embedding {\it leaky assertions} in the specification of functional correctness (Definition \ref{defn:functional_correctness}) of programs leads to a  proof system  that enables asserting leaks in the programs . Let $S$ to be an assignment or a non-nested \texttt{await} statement from the thread $T_1$. Let $P$ and $Q$ be pre- and post-conditions of $S$ respectively. Then,  proof of $\{P\}S\{Q\}$ asserts that the  leaky assertions hold when $T_1$ and $T_2$ execute concurrently.
\begin{enumerate}
	\item $\{Q\hspace{3mm}\wedge\hspace{3mm}((([at\_l_0]:t=T\wedge[at\_l_7]:t=t'\wedge t'-T\ll\Delta t)\implies (h=0))\hspace{3mm}\vee\hspace{3mm}(([at\_l_0]:t=T\wedge[at\_l_7]:t=t'\wedge t'-T\geqslant\Delta t)\implies (h=1)))\}$\\\texttt{print(`d')}\\$\{Q\}$\\
	\item $\{P\hspace{3mm}\wedge\hspace{3mm}((([at\_l_0]:t=T\wedge[at\_l_7]:t=t'\wedge t'-T\ll\Delta t)\implies (h=0))\hspace{3mm}\vee\hspace{3mm}(([at\_l_0]:t=T\wedge[at\_l_7]:t=t'\wedge t'-T\geqslant\Delta t)\implies (h=1)))\}$\\\texttt{print(`d')}\\$\{P\}$\\
	\item $\{((([at\_l_0]:t=T\wedge[at\_l_7]:t=t'\wedge t'-T\ll\Delta t)\implies (h=0))\hspace{3mm}\vee\hspace{3mm}(([at\_l_0]:t=T\wedge[at\_l_7]:t=t'\wedge t'-T\geqslant\Delta t)\implies (h=1)))\hspace{3mm}\wedge\hspace{3mm}P\}$\\S\\$\{((([at\_l_0]:t=T\wedge[at\_l_7]:t=t'\wedge t'-T\ll\Delta t)\implies (h=0))\hspace{3mm}\vee\hspace{3mm}(([at\_l_0]:t=T\wedge[at\_l_7]:t=t'\wedge t'-T\geqslant\Delta t)\implies (h=1)))\}$
\end{enumerate}

Note that the leaky assertions successfully capture the execution time inconsistencies for both the cases where semaphore $sem$ is locked or released by the process $T_1$. Therefore, the assertions remain unaffected by the pre-and post-conditions for any statement $S$ in $T_1$. As the assertions always hold good, this invariably proves the program as leaky.

\noindent{\textbf{Remarks:}} It is interesting to understand the impact if the program contains a \texttt{delay} statement; suppose the statement \texttt{skip} is replaced by {\it delay(50)} -- a delay of 50ms (milliseconds for instance). Now, consider the following cases with respect to the leaky assertions:
\begin{enumerate}
    \item $\Delta t\ll 50ms:$\\ \begin{align*}\dfrac{[at\_l_0]:t=T\quad[at\_l_7]:t=t'\quad \Delta t \ll (t'-T)\geq 50ms}{h=0}\\ \dfrac{[at\_l_0]:t=T\quad[at\_l_7]:t=t'\quad \Delta t \leq (t'-T)\ll 50ms}{h=1}\end{align*}
    \item $\Delta t\gg 50ms:$\\
    \begin{align*}\dfrac{[at\_l_0]:t=T\quad[at\_l_7]:t=t'\quad \Delta t \gg (t'-T)\geq 50ms}{h=0}\\ \dfrac{[at\_l_0]:t=T\quad[at\_l_7]:t=t'\quad \Delta t \leq (t'-T)\gg 50ms}{h=1}\end{align*}
    \item $\Delta t\approx 50ms:$\\ 
    It would be difficult to postulate leaky assertion as an observer cannot easily determine the execution path for $h=0$ or $h=1$; hence, the program may be considered secure even though it may not be; note that axiomatic proof is relative to the specifications.
\end{enumerate}
\section{Formalization of our Approach}
\label{sec:axiomatic}
In our two-language approach for establishing leaks, if we can show that assertions at various program points imply information leaks, then we would have established that the program is not-secure; in other words, it is leaky. Thus, if we establish that the program is insecure, we would have established a counter-example; on the other hand, if we establish that the program is secure, it would imply that the program is secure relative to the given specification. Below, we establish the  first from which the second follows naturally. 



\begin{defn}[\textbf{Leaky Assertion}]
\label{defn:leaky_assertion}
For a given set of assertions $\langle a_0,\dots,a_n\rangle$ at respective control points $\langle \ell_0,\dots,\ell_n\rangle$ of the given program $S$, 
using a nonempty set  of sensitive (or high) variables $h_1,\dots,h_m$ of the program, is said to be leaky if $a_i$ determinizes the value of at least one of its' sensitive variables, say $h_j$.
\end{defn}

As discussed earlier, we say that an execution is leaky if the proof of the program does not invalidate any of the  leaky assertions. 

\begin{defn}[\textbf{Leaky Execution}]
\label{defn:leaky_execution}
Given a proof of $\{P\}S\{Q\}$ (we ignore global invariant as given in \cite{owicki1976axiomatic}),
where $S$ is an assignment or non-nested \texttt{await} statement,
and a program $T$ with at least one output statement with leaky assertion $A$ as it's pre-condition, concurrent execution of $S$ and $T$ is said to be leaky if the following two conditions hold:
\begin{enumerate}
    \item $\{Q\wedge A\}T\{Q\}$
    \item $\{P\wedge A\}T\{P\}$
\end{enumerate}
\end{defn}


\begin{prop}
Let $\{P\}S\{Q\}$ be given  and let  $T$  be a statement with pre-condition $A$, where $P$, $Q$ and $A$ are  security assertions. If [$S\parallel T$] is  leaky 
then there exists at least one leaky assertion that determinizes information in at least one secret variable.
\end{prop}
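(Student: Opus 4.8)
The plan is to proceed by directly unfolding Definition~\ref{defn:leaky_execution} (Leaky Execution) and Definition~\ref{defn:leaky_assertion} (Leaky Assertion), and then to invoke the soundness of the underlying Owicki--Gries proof system to bridge the syntactic proof obligations to the semantic claim about determinization. The witness for the existential claim will simply be the assertion $A$ itself, so most of the work lies in justifying that this witness genuinely determinizes a secret variable rather than doing so vacuously.

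First I would unpack the hypothesis that $[S\parallel T]$ is leaky. By Definition~\ref{defn:leaky_execution}, this hypothesis already carries two pieces of structure: that $A$ is a leaky assertion attached as the pre-condition of an output statement of $T$, and that the two verification conditions $\{Q\wedge A\}T\{Q\}$ and $\{P\wedge A\}T\{P\}$ hold. I would read these verification conditions as precisely the non-interference requirement of Definition~\ref{defn:functional_correctness} applied to the annotated proofs of $S$ and $T$: the statement $T$, guarded by its leaky pre-condition $A$, does not falsify the assertions $P$ and $Q$ used in the proof $\{P\}S\{Q\}$. Together with the given proof of $\{P\}S\{Q\}$, this renders the combined annotated concurrent program interference free.

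Next I would invoke the soundness of the Owicki--Gries system: an interference-free annotated proof guarantees that each assertion holds whenever control reaches its associated program point in any interleaved execution. Applied here, this certifies that the leaky assertion $A$ is a valid invariant at the observable output control point of $T$ during every admissible interleaving of $S\parallel T$; in particular, no interleaving invalidates $A$. This is the step that promotes the purely syntactic verification conditions to the semantic statement that $A$ actually holds at the observation point, matching the informal reading that an execution is leaky precisely when the proof does not invalidate the leaky assertions.

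Finally, since $A$ is a leaky assertion, Definition~\ref{defn:leaky_assertion} yields directly that $A$ determinizes the value of at least one sensitive variable $h_j$; combined with the previous step, $A$ holds at the observation point and hence pins down $h_j$ there. Taking $A$ as the witness then completes the existential claim. I expect the main obstacle to be this middle step: one must argue that the determinization is non-vacuous, i.e. that $A$ is genuinely satisfiable at the observable point so that the implication-style leaky assertions (as in the timing example of Section~\ref{sec:timing_sensitive}) actually constrain $h_j$ rather than holding trivially. This is where the soundness of the proof system and the satisfiability of $A$ over the reachable states must be combined with care.
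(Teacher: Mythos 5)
Your proposal is correct, and its skeleton coincides with the paper's (essentially implicit) argument: the paper supplies no proof for this proposition at all, treating it as immediate from the definitions, which is exactly your first and last steps --- unfold Definition~\ref{defn:leaky_execution} to obtain that $A$ is a leaky assertion attached to an output statement of $T$, then apply Definition~\ref{defn:leaky_assertion}, which \emph{by definition} makes $A$ determinize some sensitive variable $h_j$, so $A$ itself is the witness and the conclusion follows tautologically. Where you genuinely go beyond the paper is the middle step: invoking Owicki--Gries soundness to promote the two verification conditions $\{Q\wedge A\}T\{Q\}$ and $\{P\wedge A\}T\{P\}$ into the semantic claim that $A$ holds at the observable control point in every interleaving, and flagging that the implication-style assertions of Section~\ref{sec:timing_sensitive} must be satisfiable non-vacuously over reachable states for the determinization to be meaningful. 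That step is not needed for the literal statement (Definition~\ref{defn:leaky_assertion} already builds determinization into the word ``leaky,'' so existence of a leaky assertion suffices), but it addresses a real weakness the paper leaves untouched: without it, a program could be declared ``leaky'' on the strength of assertions that hold only vacuously, which is precisely the failure mode the paper's own remark on the $\Delta t\approx 50ms$ case gestures at. In short, your proof subsumes the paper's intended one and adds the semantic bridge the paper omits; just be aware that the extra rigor is your contribution, not something the proposition's definitions force on you.
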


\begin{thm}[\textbf{Soundness}]
Given the proof  of $\{P\}S\{Q\}$ that has a leaky assertion $A_i$ at control point $\ell_i$ within program $S$, then there exists an execution path that leads to  truth of  $A_i$.
\end{thm}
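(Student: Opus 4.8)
The plan is to split the statement into a soundness part and a reachability part, and to lean on the soundness of the underlying Owicki--Gries system with respect to the operational semantics adopted in Section~\ref{sec:language_syntax} (the transition rules of~\cite{brookes1996full}). I would model a computation as a sequence of configurations $\langle \ell,\sigma\rangle$, where $\ell$ records the thread-local program counters together with the auxiliary time variable $t$ and $\sigma$ is the data state, with transitions given by the small-step rules for assignment, branching, iteration, \texttt{print}, \texttt{delay}, \texttt{await}, and interleaving. The theorem then amounts to showing that the annotation $A_i$ sitting at $\ell_i$ in the proof is both (i) an invariant of every computation that reaches $\ell_i$, and (ii) realized by at least one actual computation.

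For (i) I would prove an invariant lemma: in any valid proof outline of $\{P\}S\{Q\}$, the assertion attached to each control point holds whenever that point is reached from a $P$-state. The argument is by structural induction on the proof, verifying that each axiom and rule --- skip, assignment, sequencing, conditional, while, \texttt{await}, the \texttt{print} and \texttt{delay} rules (which leave the data state unchanged apart from advancing the auxiliary clock $t$), and parallel composition --- preserves this property. The only genuinely concurrent case is parallel composition, where the interference-freedom conditions of Definition~\ref{defn:functional_correctness}, instantiated by the two verification conditions $\{Q\wedge A\}T\{Q\}$ and $\{P\wedge A\}T\{P\}$ of Definition~\ref{defn:leaky_execution}, guarantee that a step of one thread cannot falsify the annotation currently attached to a control point of the other. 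This is the classical Owicki--Gries soundness argument, now carried out with security assertions in place of functional ones.

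For (ii) I would exhibit a concrete interleaving that reaches $\ell_i$. Since $A_i$ is leaky in the sense of Definition~\ref{defn:leaky_assertion} it determinizes some secret $h_j$ and is in particular satisfiable, not the vacuous $\mathit{false}$; and since the given annotation belongs to a non-vacuous proof of $\{P\}S\{Q\}$ with $P$ satisfiable, a feasible control path from the entry location to $\ell_i$ exists. I would trace such a path, discharging each \texttt{await} guard encountered along it by checking that it can be made true in the reached state, as in the semaphore path of Figure~\ref{fig:location-leaks}. Composing this reaching computation with the invariant lemma produces an execution whose state at $\ell_i$ satisfies $A_i$, which is exactly the claimed path. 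I expect (ii) to be the main obstacle: soundness delivers only the universal statement ``every computation reaching $\ell_i$ satisfies $A_i$,'' and in a language with \texttt{await} a syntactically reachable location may be dynamically unreachable through deadlock or a permanently false guard, so the crux is to rule out such dead or blocked locations by leveraging non-vacuousness of the proof together with the interference-freedom conditions, which keep the thread-local guards jointly satisfiable along at least one interleaving.
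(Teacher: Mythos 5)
Your part (i) is sound, and it is in effect everything the paper's own proof contains: the paper discharges the theorem with the single line ``follows straight from the proof of $\{P\}S\{Q\}$,'' i.e., an appeal to Owicki--Gries invariance --- every computation that reaches $\ell_i$ from a $P$-state satisfies the annotation $A_i$ attached there, with the two conditions of Definition~\ref{defn:leaky_execution} playing the role of the interference-freedom checks of Definition~\ref{defn:functional_correctness}. Your invariance lemma is the explicit version of that argument, so up to this point you and the paper coincide.

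The gap is in part (ii), and it sits exactly at the point you yourself flag as the crux but then do not close. From satisfiability of $A_i$ and of $P$ you infer that ``a feasible control path from the entry location to $\ell_i$ exists.'' That inference is invalid: partial-correctness proof outlines say nothing about reachability, and a satisfiable assertion is a perfectly valid annotation for a dynamically unreachable control point, because the assertion flowing into such a point can be taken to be $\mathit{false}$, and $\mathit{false}$ implies any $A_i$. Concretely, let $S$ contain \texttt{x=0; await x>0 then skip;} followed by $\ell_i$: \texttt{print(`d')} where no other thread ever makes the guard true, or place $\ell_i$ after \texttt{while true do skip done}; in either case there is a valid proof outline of $\{P\}S\{Q\}$ carrying a satisfiable leaky $A_i$ at $\ell_i$, yet no execution reaches $\ell_i$, so no execution ``leads to truth of $A_i$.'' Interference freedom cannot repair this, since it is a safety condition (steps of other threads preserve annotations), not a progress condition; it cannot force an \texttt{await} guard to ever become true or a loop to terminate. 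Closing the gap would require an extra hypothesis --- deadlock-freedom or termination of the prefix leading to $\ell_i$, or dynamic reachability of $\ell_i$ assumed outright --- which neither you nor the paper supplies. In that respect your proposal is more candid than the paper's one-line proof: it makes visible that the existential half of the theorem does not follow from the proof of $\{P\}S\{Q\}$ alone, but it does not succeed in proving it either.
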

\begin{proof}
Follows straight from the proof of $\{P\}S\{Q\}$.
\end{proof}
Issues of completeness also follow on the same lines.\newline

\noindent{\bf Application to Flow Security:\\} Dynamic labelling (DL) algorithms \cite{secrypt18} are in use for compile-time flow security certification. For a program performing IO, however, the DL algorithm could also be effectively used for identifying program points for arriving at leaky assertions at plausible control points. DL algorithm binds subjects and objects of a program with the given static or dynamic labels from a security lattice \cite{denning1976lattice} and evaluates the program as per the flow policy that enforces information flow only in upward direction of the lattice. For standard output, a static label \textit{low} is assigned denoting least confidential policy. The output statements controlled by sensitive variables that are labelled with higher security labels would cause an information flow from higher label to low, violating flow policy; thus, the algorithm easily highlights public statements with possible sensitivities.  Now, one could visualize the interactions of the various concurrent components that could possibly interfere in the functional sense. Using such an intuition, one could come up with different leaky assertions that would detect information leaks.

The process of finding information leak in programs that perform IO could be automated through the following steps: (i) first use DL algorithm to identify the public statements that could leak information; (ii) capture time-sensitivity for deriving leaky assertions corresponding to those public statements using various analysis tools or oracles; (iii) arrive at possible interaction of components at these public locations, and (iv) apply model checkers to check for certain properties in components assuming possible assertions in other components. We are working on arriving such algorithmic techniques for detecting information leaks.
\section{Related Work}
\label{sec:related_work}
In this section, we briefly highlight some of the important approaches in the literature to prevent timing leaks in concurrent settings. In particular, we focus on the solution that prevents leaking information through publicly observable control points.

Our idea is primarily motivated by the work shown in \cite{owicki1976axiomatic,amtoft2004information,beringer2007secure}. The first one presented flow proof rules that use correctness properties given in \cite{owicki1976axiomatic}.  The last two approaches have given a semantic interpretation of non-interference using Hoare-like logic. However, the goal of the above approaches is to establish flow security of programs, whereas we aim to find information leaks due to IO statements and time-sensitive operations using simple inference.

Le Guernic \cite{le2007automaton} first envisages the potential danger in the use of output statements in concurrent programs, and initial solution that motivated us to provide an alternative approach in the form or security assertions. In this approach, the author develops a run-time monitor that works in tandem with a security automaton to decide whether to allow, deny or modify executions of output statements and synchronization commands, particularly those appear in a branch conditioned on sensitive variables. The automaton forbids executions of output statements in this situation. Further, it acquires locks of synchronization command before evaluating such a conditional. For instance, acquiring semaphore $sem$ in the program shown in Figure \ref{fig:observable_leak} would prevent from outputting ``acdb''. The approach provides a thoughtful solution; nonetheless, it is too restrictive. E.g., in the absence of an output statement, say \texttt{print(`c')}, the program is harmless, but as per the above solution executing the program would still be dangerous. The axiomatic approach proposed in this paper removes such restrictions on the program. Our approach enables us to assert flow insecurity in the program methodically through the ``Gries-Owicki'' conditions for functional correctness, thus overcomes the cases of false alarms.


Other than the above approach, the majority of the mechanisms found in the literature to prevent timing leaks in concurrent programs are based on either (i) developing a sound typing rules to enforce secure-flow properties in concurrent programming languages \cite{smith2001new,smith1998secure}; (ii) performing source transformation to balance the time variation or eliminate the timing channel \cite{russo2006closing,agat2000transforming,russo2006security}; or (iii) developing a run-time monitor to identify potentially dangerous executions that could leak information \cite{vassena2019foundations,stefan2012addressing,askarov2015hybrid}.

\section{Conclusions and Future Work}
\label{sec:conclusions}
In this paper, we have proposed an axiomatic proof system using leaky assertions placed at output statement locations. To the best of our knowledge,  we are the first to propose an axiomatic approach using approaches for establishing correctness of real-time distributed programs to analyze information leaks in concurrent programs that have observable public locations, leading to the following advantages: (i) it enables to formally prove the presence of information leaks; (ii) aids in integrating the notions of non-interference used in functional and security contexts with respect to complete information flow models like that of Dorothy Denning \cite{denning1976lattice}, for shared variable programs.

We are working towards automating the process of detecting information leaks in concurrent programs that exhibit execution progress through public statements. Automation has potential use in flow security certification of concurrent programs to identify information leaks via sensitive public statements. Initial work shows that DL algorithm can be integrated with model checkers for validating/speculating leaky assertions.

\bibliographystyle{IEEEtran}
\bibliography{main}

\end{document}